\newtheorem{proposition}{Proposition}
\newtheorem{corollary}{Corollary}
\newtheorem{theorem}{Theorem}
\newtheorem{lemma}{Lemma}
\theoremstyle{remark}
\newtheorem{remark}{Remark}
\newcommand{\qbinom}[2]{\genfrac{[}{]}{0pt}{}{\,#1\,}{\,#2\,}_{\!q}} 
\DeclareMathOperator{\Row}{\text{Row}}
\DeclareMathOperator{\Span}{\text{Span}}
\begin{document}

\title{Probability of Partially Decoding Network-Coded Messages}

\author[1]{Jessica~Claridge}
\author[2]{Ioannis~Chatzigeorgiou}
\affil[1]{Department of Mathematics, Royal Holloway, University of London, UK\authorcr(e-mail: jessica.claridge.2013@live.rhul.ac.uk)}
\affil[2]{School of Computing and Communications, Lancaster University, UK\authorcr(e-mail: i.chatzigeorgiou@lancaster.ac.uk)}

\date{}
\maketitle
\begin{abstract}
\noindent
In the literature there exists analytical expressions for the probability of a receiver decoding a transmitted source message that has been encoded using random linear network coding. In this work, we look into the probability that the receiver will decode at least a fraction of the source message, and present an exact solution to this problem for both non-systematic and systematic network coding. Based on the derived expressions, we investigate the potential of these two implementations of network coding for information-theoretic secure communication and progressive recovery of data. 
\\ \\
\noindent
\textbf{Keywords:} Random linear network coding, rank-deficient decoding, probability analysis, information-theoretic security.
\end{abstract}

% -----------------------------------------------------------------------------

\section{Introduction}
\label{sec.intro}

Random linear network coding (RNLC) is the process of constructing coded packets, which are random linear combinations of source packets over a finite field \cite{Ho06}. If $k$ source packets are considered, decoding at a receiving node starts after $k$ linearly independent coded packets have been collected. The probability of recovering all of the $k$ source packets when at least $k$ coded packets have been received has been derived in \cite{Trullols-Cruces11}. However, the requirement for a large number of received coded packets before decoding can introduce undesirable delays at the receiving nodes. In an effort to alleviate this problem, \textit{rank-deficient} decoding was proposed in \cite{Yan2013} for the recovery of a subset of source packets when fewer than $k$ coded packets have been obtained. Whereas the literature on network coding defines \textit{decoding success} as the recovery of 100\% of the source packets with a certain probability, the authors of \cite{Yan2013} presented simulation results that measured the \textit{fraction of decoding success}, that is, the recovery of a percentage of the source packets with a certain probability.

The fundamental problem that has motivated our work is the characterization of the probability of recovering some of the $k$ source packets when $n$ coded packets have been retrieved, where $n$ can be smaller than, equal to or greater than $k$. 
This idea was considered in~\cite{Gadouleau2011} for random network communications over a matroid framework. The authors show that partial decoding is highly unlikely. This problem has also been explored in the context of secure network coding, e.g., \cite{Bhattad2005,Lima2007}. Strict information-theoretic security can be achieved if and only if the mutual information between the packets available to an eavesdropper and the source packets is zero \cite{Cai2002}. When network coding is used, \textit{weak} security can be achieved if the eavesdropper cannot obtain $k$ linearly independent coded packets and, hence, cannot recover any meaningful information about the $k$ source packets \cite{Bhattad2005}. The authors of \cite{Bhattad2005} obtained bounds on the probability of RLNC being weakly secure and showed that the adoption of large finite fields improves security. A different setting but a similar problem was investigated in \cite{Lima2007}. Intermediate relay nodes between transmitting and receiving nodes were treated as potentially malicious, and criteria for characterizing the algebraic security of RLNC were defined. The authors demonstrated that the probability of an intermediate node recovering a strictly positive number of source packets tends to zero as the field size and the number of source packets go to infinity.

This paper revisits the aforementioned problem and obtains an exact expression for the probability that a receiving node will recover at least $x$ of the $k$ source packets if $n$ coded packets are collected, for $x\leq n$. The derived expression can be seen as a generalization of \cite[eq. (7)]{Trullols-Cruces11}. The \mbox{paper also looks} at the impact of transmitting source packets along with coded packets, known as \textit{systematic} RLNC, as opposed to transmitting only coded packets, referred to as \textit{non-systematic} RLNC.

In the remainder of the paper, Section~\ref{sec.System} formulates the problem, Section~\ref{sec.Analysis} obtains the probability of recovering a fraction of a network-coded message, Section~\ref{sec.results} presents results and Section~\ref{sec.conclusions} summarizes the conclusions of this work.

% -----------------------------------------------------------------------------

\section{System model and problem formulation}
\label{sec.System}

We consider a receiving network node, which collects $n$ packets and attempts to reconstruct a message that consists of $k$ source packets. The $n$ packets could have been broadcast by a single transmitting node or could have been originated from multiple nodes that possess the same message.

In the case of \textit{non-systematic} communication, transmitted packets are generated from the $k$ source packets using RLNC over $\mathbb{F}_q$ \cite{Ho06}, where $q$ is a prime power and $\mathbb{F}_q$ denotes the finite field of $q$ elements. In the case of \textit{systematic} RLNC, a sequence of $n_\mathrm{T}$ transmitted packets consists of the $k$ source packets and $n_\mathrm{T}-k$ coded packets that have been generated as in the non-systematic case. In both cases, a coding vector of length $k$, which contains the weighting coefficients used in the generation of a packet, is transmitted along with each packet. At the receiving node, the coding vectors of the $n$ successfully retrieved packets form the rows of a matrix $\mathbf{M}\in\mathbb{F}^{n\times k}_q$, where $\mathbb{F}_q^{n \times k}$ denotes the set of all $n \times k$ matrices over $\mathbb{F}_q$. The $k$ source packets can be recovered from the $n$ received packets if and only if $k$ of the $n$ coding vectors are linearly independent, implying that $\mathrm{rank}(\mathbf{M})=k$ for $n\geq k$. The probability that the $n\times k$ random matrix $\mathbf{M}$ has rank $k$ and, thus, the receiving node can reconstruct the entire message is given in \cite{Trullols-Cruces11} for non-systematic RLNC and \cite{Shrader09} for systematic RLNC.

The objective of this paper is to derive the probability that a receiving node will reconstruct at least $x\leq k$ source packets upon reception of $n$ network-coded packets. To formulate this problem, let $\mathbf{e}_i$ denote the $i$-th unit vector of length $k$.
A coding vector, or a row of $\mathbf{M}$, equal to $\mathbf{e}_i$ represents the $i$-th source packet. Let $X$ be the set of indices corresponding to the unit vectors contained in the rowspace of $\mathbf{M}$, denoted by $\Row(\mathbf{M})$, so that $X=\{i:\mathbf{e}_i \in \Row(\mathbf{M})\}$. We write $\lvert X\rvert$ to denote the cardinality of random variable $X$. Furthermore, we define random variables $R$ and $N$ to give the rank of $\mathbf{M}$ and the number of rows in $\mathbf{M}$, respectively. The considered problem has been decomposed into the following two tasks:
\begin{enumerate}
\item Obtain the probability of recovering at least $x$ source packets, provided that $r$ out of the $n$ received packets are linearly independent, for $x\leq r\leq k$. This is equivalent to finding the probability of $\Row(\mathbf{M})$ containing at least $x$ unit vectors, given $\mathbf{M}$ has $n$ rows and rank $r$. We denote this probability by $P(\lvert X\rvert\geq x\,\vert\,R=r,\,N=n)$.
\item {Obtain the probability of recovering at least $x$ source packets, provided that $n\geq x$ packets have been collected.} We write $P(\lvert X\rvert\geq x\,\vert\,N=n)$ to refer to this probability.
\end{enumerate}

Derivation of $P(\lvert X\rvert\geq x\,\vert\,R=r,\,N=n)$ and \mbox{$P(\lvert X\rvert\geq x\,\vert\,N=n)$} is the focus of the following section.

% -----------------------------------------------------------------------------

\section{Probability analysis}
\label{sec.Analysis}

The analysis presented in this section relies on the well- known Principle of Inclusion and Exclusion \cite[Prop. 5.2.2]{Cameron1994}, which is repeated below for clarity.
\begin{lemma} 
\label{lemma.PIE}
\textbf{Principle of inclusion and exclusion.}
Given a set $A$, let $f$ be a real valued function defined for all sets $S,J\subseteq A$. If $g(S) = \sum_{J:J\supseteq S} f(J)$ then $f(S) = \sum_{J:J\supseteq S} (-1)^{\lvert J\setminus S\rvert} g(J)$.
\end{lemma}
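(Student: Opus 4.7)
The plan is to prove the identity by substituting the defining relation for $g$ into the right-hand side and collecting terms in a standard Möbius-inversion style. First I would write
$$\sum_{J:J\supseteq S} (-1)^{\lvert J\setminus S\rvert}\, g(J) \;=\; \sum_{J:J\supseteq S}(-1)^{\lvert J\setminus S\rvert}\sum_{K:K\supseteq J} f(K),$$
and then interchange the order of summation. Each pair $(J,K)$ on the right satisfies $S\subseteq J\subseteq K$, so rewriting by first summing over $K$ gives
$$= \sum_{K:K\supseteq S} f(K)\sum_{J:S\subseteq J\subseteq K}(-1)^{\lvert J\setminus S\rvert}.$$

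Next, for fixed $K\supseteq S$ I would parameterise the inner sum by $j=\lvert J\setminus S\rvert$. Setting $m=\lvert K\setminus S\rvert$, each $J$ with $S\subseteq J\subseteq K$ corresponds to an arbitrary subset $J\setminus S$ of $K\setminus S$, and there are $\binom{m}{j}$ such subsets of size $j$. Therefore
$$\sum_{J:S\subseteq J\subseteq K}(-1)^{\lvert J\setminus S\rvert} \;=\; \sum_{j=0}^{m}\binom{m}{j}(-1)^{j} \;=\; (1-1)^{m},$$
which equals $1$ when $m=0$ (i.e., $K=S$) and $0$ otherwise. Plugging this back into the outer sum kills every contribution except $K=S$, leaving exactly $f(S)$, as required.

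The only real obstacle is bookkeeping: one must check that interchanging the order of summation enumerates the same set of pairs $(J,K)$ with $S\subseteq J\subseteq K$, and that the collection of $K\supseteq S$ with $f(K)\neq 0$ is finite so that the rearrangement is legitimate. The latter is implicit in the statement since the original sum defining $g(S)$ must itself be well-defined. No deeper idea is needed; once the double sum is reorganised, the collapse of the inner alternating sum via the binomial theorem finishes the argument.
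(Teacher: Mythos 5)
Your argument is correct and is the standard one: substitute the definition of $g$, interchange the (finite) double sum, and collapse the inner alternating sum $\sum_{j=0}^{m}\binom{m}{j}(-1)^{j}=(1-1)^{m}$ so that only $K=S$ survives. The paper itself gives no proof of this lemma --- it simply cites it from Cameron's textbook --- so there is nothing to compare against; your write-up, including the remark that finiteness of the relevant family of sets is needed to justify the rearrangement, would serve as a complete self-contained proof.
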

For non-negative integers $m$ and $d$, we denote by $\binom{m}{d}$ the binomial coefficient, which gives the number of $d$-element sets of an $m$-element set. 
The $q$-analog of the binomial coefficient, known as the \textit{Gaussian binomial coefficient} and denoted by $\qbinom{m}{d}$, enumerates all $d$-dimensional subspaces of an $m$-dimensional space over $\mathbb{F}_q$ \cite[p. 125]{Cameron1994}.

Given $\mathbf{M}$ has rank $r$, let $P(\lvert X\rvert=x\,\vert\,R=r,\,N=n)$ denote the probability of recovering \textit{exactly} $x \leq r$ source packets or, equivalently, the probability of $\Row(\mathbf{M})$ containing \textit{exactly} $x \leq r$ unit vectors. The following theorem obtains an expression for $P(\lvert X\rvert= x\,\vert\,R=r,\,N=n)$, which is then used in the derivation of $P(\lvert X\rvert\geq x\,\vert\,R=r,\,N=n)$.

\begin{theorem} 
\label{thm.cond_exactly_x}
Given a random $n \times k$ matrix $\mathbf{M}$ of rank $r$, the probability that the rowspace of $\mathbf{M}$ contains exactly $x \leq r$ unit vectors is given by
\begin{equation}
\label{eq.cond_exactly_x}
P(\lvert X\rvert=x\,\vert\,R=r,\,N=n)=\frac{\binom{k}{x}}{\qbinom{\phantom{\widetilde k}\!\!\!k}{r}}%
\displaystyle\sum_{j=0}^{k-x}(-1)^{j}\binom{k-x}{j}\!\genfrac{[}{]}{0pt}{}{\,k-x-j\,}{\,r-x-j\,}_{\!q}.
\end{equation}
\end{theorem}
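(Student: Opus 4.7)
The plan is to reduce the claim to a counting problem about subspaces, then apply the inclusion–exclusion principle of Lemma~\ref{lemma.PIE}.

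First I would observe that since $\mathbf{M}$ is drawn uniformly over $\mathbb{F}_q^{n\times k}$, the conditional distribution of $\mathbf{M}$ given $R=r$ assigns probability proportional to the number of $n\times k$ matrices whose rowspace equals $V$, for each $r$-dimensional subspace $V$. This count depends only on $\dim V = r$ (by transitivity of $GL_k(\mathbb{F}_q)$ on $r$-subspaces), so $\Row(\mathbf{M})$ conditioned on $R=r$ is \emph{uniform} over the $\qbinom{k}{r}$ subspaces of dimension $r$ in $\mathbb{F}_q^k$. This immediately explains the denominator in \eqref{eq.cond_exactly_x}, and reduces the problem to counting the $r$-dimensional subspaces $V\subseteq\mathbb{F}_q^k$ whose set of contained unit vectors, $\{i:\mathbf{e}_i\in V\}$, equals a prescribed $x$-element subset of $\{1,\dots,k\}$.

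Next I would set up inclusion–exclusion on the lattice of subsets of $A=\{1,\dots,k\}$. For $S\subseteq A$ define
\[
f(S)=\#\{V:\dim V=r,\;\{i:\mathbf{e}_i\in V\}=S\},\qquad
g(S)=\#\{V:\dim V=r,\;\mathbf{e}_i\in V\text{ for all }i\in S\},
\]
so that $g(S)=\sum_{J\supseteq S} f(J)$. The key step is evaluating $g(S)$: any $V$ counted by $g(S)$ must contain the $|S|$-dimensional coordinate subspace $U_S=\Span\{\mathbf{e}_i:i\in S\}$, and such $V$ are in bijection with $(r-|S|)$-dimensional subspaces of the $(k-|S|)$-dimensional quotient $\mathbb{F}_q^k/U_S$. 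Hence $g(S)=\qbinom{k-|S|}{r-|S|}$, which depends only on $|S|$.

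Finally, applying Lemma~\ref{lemma.PIE} to a fixed $S$ with $|S|=x$ gives
\[
f(S)=\sum_{J\supseteq S}(-1)^{|J\setminus S|}g(J)
     =\sum_{j=0}^{k-x}(-1)^{j}\binom{k-x}{j}\qbinom{k-x-j}{r-x-j},
\]
where I have grouped supersets $J$ by the size $j=|J\setminus S|$ of their excess over $S$; there are $\binom{k-x}{j}$ such $J$, each contributing $g(J)=\qbinom{k-x-j}{r-x-j}$. Since $f(S)$ depends only on $|S|=x$, the total number of $r$-subspaces with exactly $x$ unit vectors is $\binom{k}{x}f(S)$, and dividing by the total $\qbinom{k}{r}$ yields the stated formula. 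The only place where one has to be slightly careful is the quotient-space identification used to compute $g(S)$; the rest is bookkeeping.
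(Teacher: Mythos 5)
Your proposal is correct and follows essentially the same route as the paper: compute $g(S)$ via the quotient-space bijection, apply Lemma~\ref{lemma.PIE} grouping supersets by $j=\lvert J\setminus S\rvert$, and multiply by $\binom{k}{x}$. The only difference is cosmetic (you count subspaces rather than working with probabilities directly), and your explicit justification that $\Row(\mathbf{M})$ conditioned on $R=r$ is uniform over $r$-dimensional subspaces is a step the paper leaves implicit.
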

\begin{proof}
For $S\!\subseteq\!J\subseteq\!\{1, \dots k \}$, let $g(S)$~be~the~probability that $\{ \mathbf{e}_i : i \in S \} \subseteq \Row(\mathbf{M})$, that is, the probability that $S \subseteq X$. This is just the probability that $\Row( \mathbf{M} )$ contains a fixed $\lvert S\rvert$-dimensional subspace, namely the space \mbox{$V=\Span\{\mathbf{e}_i:i \in S\}$}. We see that, by considering the quotient space $\mathbb{F}_q^k/V$, there is a direct correspondence between \mbox{$r$-dimensional} subspaces of $\mathbb{F}_q^k$ containing $V$, and $(r-\lvert S\rvert)$-dimensional subspaces of a \mbox{$(k-\lvert S\rvert)$-dimensional} space. Hence, there are $\qbinom{k-\lvert S \rvert}{r-\lvert S \rvert}$ $r$-dimensional subspaces of $\mathbb{F}_q^k$ containing $V$. The probability that $\Row(\mathbf{M})$ contains the space $V$ is equal to
\begin{equation} 
\label{g(S)}
g(S) = \frac{\qbinom{k-\lvert S \rvert}{r-\lvert S \rvert}}{\qbinom{\phantom{\widetilde k}\!\!\!k}{r}}.
\end{equation}
where the denominator in \eqref{g(S)} enumerates the $r$-dimensional subspaces of $\mathbb{F}_q^k$. Now, let $f(S)$ be the probability that $S=X$, that is, the probability that $\{ \mathbf{e}_i : i \in S \} \subseteq \Row( \mathbf{M})$ and $\mathbf{e}_i \notin \Row(\mathbf{M})$ for $i\notin S$. It follows that $g(S) = \sum_{J \supseteq S} f(J)$. Invoking the Principle of Inclusion and Exclusion (Lemma~\ref{lemma.PIE}) and using \eqref{g(S)}, we can write $f(S) = \sum_{J \supseteq S} (-1)^{|J\setminus S|} \cdot g(J)$ and expand it to
\begin{align}
f(S) \notag&= \sum_{J \supseteq S} (-1)^{|J\setminus S|}
\cdot
\frac{\qbinom{k-\lvert J \rvert}{r-\lvert J \rvert}}{\qbinom{k}{r}} \notag
\\
&=
\frac{1}{\qbinom{k}{r}}
\sum_{J' \subseteq \{ 1, \dots , k \}\setminus S}
(-1)^{\lvert J' \rvert}
\qbinom{k-\lvert S\rvert-\lvert J'\rvert}{r-\lvert S\rvert-\lvert J'\rvert}
 \label{f(S):J'}
\\
&=
\frac{1}{\qbinom{k}{r}}
\sum_{j=0}^{k-\lvert S\rvert}
(-1)^{j}
\binom{k-\lvert S\rvert}{j}
\qbinom{k-\lvert S\rvert-j}{r-\lvert S\rvert-j} \label{f(S):j}
\end{align}
where \eqref{f(S):J'} follows by setting $J'=J\setminus S$, and \eqref{f(S):j} follows since there are $\binom{k-\lvert S\rvert}{j}$ sets $J'$ of size $j$. Considering that $f(S)$ is the probability that $X=S$, we can write
\begin{equation}
P(\lvert X\rvert=x\,\vert\,R=r,\,N=n) =\!\!\sum_{S: \lvert S \rvert =x}\!\!\!f(S) = \binom{k}{x} f(S') \label{eq:P(|X|=x)}
\end{equation}
where $S'$ is any subset of $\{1, \dots, k \}$ of size $x$. The second equality in \eqref{eq:P(|X|=x)} holds since there are $\binom{k}{x}$ sets \mbox{$S \subseteq \{1, \dots, k\}$} of size $x$. Substituting \eqref{f(S):j} in \eqref{eq:P(|X|=x)} gives the result.
\end{proof}

\begin{remark}
Theorem~\ref{thm.cond_exactly_x} can be seen as a special case of~\cite[Proposition 6]{Gadouleau2011}. Whereas the proof in \cite{Gadouleau2011} uses elements of matroid theory, our paper proposes an alternative and more intuitive proof strategy. 
\end{remark}

\begin{corollary} 
\label{cor.cond_atleast_x}
Given a random $n \times k$ matrix $\mathbf{M}$ of rank $r$, the probability that the rowspace of $\mathbf{M}$ contains at least $x \leq r$ unit vectors is given by
\begin{equation}
\label{eq.cond_atleast_x}
P(\lvert X\rvert\!\geq\! x\,\vert\,R\!=\!r,\,N\!=\!n)
=\frac{1}{\qbinom{\phantom{\widetilde k}\!\!\!k}{r}}%
\!\sum_{i=x}^{r}\!\binom{k}{i}\!\sum_{j=0}^{k-i}(-1)^{j}\binom{k-i}{j}\!\qbinom{k-i-j}{r-i-j}.
\end{equation}
\end{corollary}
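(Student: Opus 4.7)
The plan is to decompose the event $\{\lvert X\rvert\geq x\}$ into the disjoint union of the events $\{\lvert X\rvert=i\}$ for $i$ ranging over the admissible values, apply Theorem~\ref{thm.cond_exactly_x} to each term, and then pull the common factor $1/\qbinom{k}{r}$ outside the sum.

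The first step is to identify the correct upper limit of summation. I would observe that distinct unit vectors $\mathbf{e}_{i_1},\dots,\mathbf{e}_{i_m}$ are always linearly independent, so if they all lie in $\Row(\mathbf{M})$ then $m\leq\dim\Row(\mathbf{M})=r$. Consequently $\lvert X\rvert\leq r$ almost surely given $R=r$, and hence
\begin{equation*}
P(\lvert X\rvert\geq x\,\vert\,R=r,\,N=n)=\sum_{i=x}^{r}P(\lvert X\rvert=i\,\vert\,R=r,\,N=n).
\end{equation*}

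The second step is purely mechanical: substitute the expression from Theorem~\ref{thm.cond_exactly_x} (with $x$ replaced by the summation index $i$) into each summand, note that the prefactor $\qbinom{k}{r}^{-1}$ is independent of $i$ and may therefore be factored out, and rename the inner summation variable to $j$. This yields exactly the right-hand side of \eqref{eq.cond_atleast_x}.

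There is no genuine obstacle here; the only subtle point is the observation that $\lvert X\rvert$ cannot exceed $r$, which justifies truncating the summation at $i=r$ rather than at $i=k$. Once that bound is in place, the result is a direct corollary of Theorem~\ref{thm.cond_exactly_x} by the law of total probability applied to the partition of $\{\lvert X\rvert\geq x\}$ into the events $\{\lvert X\rvert=i\}$ for $x\leq i\leq r$.
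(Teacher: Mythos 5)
Your proposal is correct and follows exactly the paper's own argument: decompose $\{\lvert X\rvert\geq x\}$ into the disjoint events $\{\lvert X\rvert=i\}$ for $x\leq i\leq r$ and substitute the expression from Theorem~\ref{thm.cond_exactly_x}. Your explicit justification that $\lvert X\rvert\leq r$ (because distinct unit vectors in $\Row(\mathbf{M})$ are linearly independent) is a small but welcome addition that the paper leaves implicit.
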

\begin{proof}
By definition, the probability $P(\lvert X\rvert\geq x\,\vert\,R=r,\,N=n)$ is equal to $\sum_{i=x}^{r} P(\lvert X\rvert=i\,\vert\,R=r,\,N=n)$. Substituting in \eqref{eq.cond_exactly_x} gives the result.
\end{proof}

Note that, although $\mathbf{M}$ is an $n \times k$ matrix, the probabilities in \eqref{eq.cond_exactly_x} and \eqref{eq.cond_atleast_x} hold for any value of $n\!\geq\!r$. Having obtained an expression for the probability $P(\lvert X\rvert\!\geq\!x\,\vert\,R\!=\!r,\,N\!=\!n)$, we now proceed to the derivation of $P(\lvert X\rvert\!\geq\!x\,\vert\,N\!=\!n)$. This probability is denoted by $P_\mathrm{ns}(\lvert X\rvert\!\geq\!x\,\vert\,N\!=\!n)$ and $P_\mathrm{s}(\lvert X\rvert\!\geq\! x\,\vert\,N\!=\!n)$ for non-systematic and systematic RLNC, respectively. Expressions for each case are derived in the following two propositions.

% ---- Proposition for non-systematic network coding ----
\begin{proposition}
\label{prop.nonsys}
If a receiving node collects $n$ random linear combinations of $k$ source packets, the probability that at least $x\leq k$ source packets will be recovered is
\begin{align}
\label{eq.prob_ns}
&P_\mathrm{ns}(\lvert X\rvert \geq x \,\vert\,N=n)\nonumber\\%
&=\frac{1}{q^{nk}}
\displaystyle\sum_{r=x}^{\min(n,k)}\!%
\Biggl(%
\displaystyle\sum_{i=x}^{r}%
\binom{k}{i}%
\displaystyle\sum_{j=0}^{k-i}%
\left(-1\right)^{j}%
\binom{k-i}{j}\!%
\genfrac{[}{]}{0pt}{}{\,k-i-j\,}{\,r-i-j\,}_{\!q\!}%
\Biggr)\!%
\displaystyle\prod_{\ell=0}^{r-1}(q^n-q^\ell).%
\end{align}
\end{proposition}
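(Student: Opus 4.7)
The plan is to use the law of total probability, conditioning on the rank $R$ of the received matrix $\mathbf{M}$. Since $|X|\geq x$ requires $R\geq x$, and $R\leq \min(n,k)$ always, we have
\begin{equation*}
P_\mathrm{ns}(\lvert X\rvert \geq x \mid N=n) \;=\; \sum_{r=x}^{\min(n,k)} P(\lvert X\rvert \geq x \mid R=r,\,N=n)\cdot P(R=r\mid N=n).
\end{equation*}
The first factor is exactly the quantity supplied by Corollary~\ref{cor.cond_atleast_x}, so the whole task reduces to producing a closed form for $P(R=r\mid N=n)$ under non-systematic RLNC and then observing that the $\qbinom{k}{r}$ in the denominator of Corollary~\ref{cor.cond_atleast_x} cancels cleanly against a factor in the rank distribution.

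For the rank distribution, I would argue as follows. Under non-systematic RLNC every entry of $\mathbf{M}$ is uniform and independent over $\mathbb{F}_q$, so $\mathbf{M}$ is uniform on the $q^{nk}$ elements of $\mathbb{F}_q^{n\times k}$. The number of $n\times k$ matrices of rank $r$ can be counted in two steps: first choose the $r$-dimensional row space $V\subseteq \mathbb{F}_q^k$, for which there are $\qbinom{k}{r}$ choices; then, fixing a basis of $V$, the matrices of rank $r$ with row space contained in $V$ correspond bijectively to $n\times r$ coordinate matrices of rank $r$, of which there are $\prod_{\ell=0}^{r-1}(q^n-q^\ell)$ (pick successive rows outside the span of the previous ones). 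Hence
\begin{equation*}
P(R=r\mid N=n)\;=\;\frac{1}{q^{nk}}\,\qbinom{k}{r}\prod_{\ell=0}^{r-1}(q^n-q^\ell).
\end{equation*}

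Substituting this together with \eqref{eq.cond_atleast_x} into the total-probability sum, the factor $\qbinom{k}{r}$ in $P(R=r\mid N=n)$ cancels the $1/\qbinom{k}{r}$ coming from Corollary~\ref{cor.cond_atleast_x}, leaving precisely the expression~\eqref{eq.prob_ns}. There is no real analytic obstacle: the content of Theorem~\ref{thm.cond_exactly_x} and Corollary~\ref{cor.cond_atleast_x} has already done the combinatorial work. The only thing that needs care is verifying the summation range: taking $r<x$ contributes zero because Corollary~\ref{cor.cond_atleast_x} gives zero there (and the sum is vacuous), while $r>\min(n,k)$ contributes zero because $\prod_{\ell=0}^{r-1}(q^n-q^\ell)$ vanishes for $r>n$ and $\qbinom{k}{r}=0$ for $r>k$; so restricting $r$ to $x\leq r\leq \min(n,k)$ is consistent.
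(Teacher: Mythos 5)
Your proposal is correct and follows essentially the same route as the paper: a total-probability decomposition over the rank $r$, combined with Corollary~\ref{cor.cond_atleast_x} and the rank distribution of a uniform random matrix. The only (cosmetic) difference is that you derive the rank distribution directly in the form $\frac{1}{q^{nk}}\qbinom{k}{r}\prod_{\ell=0}^{r-1}(q^n-q^\ell)$, so the Gaussian binomial cancels immediately, whereas the paper cites it as $\frac{1}{q^{nk}}\qbinom{n}{r}\prod_{\ell=0}^{r-1}(q^k-q^\ell)$ and then invokes the identity \eqref{eq.ns_step4} to reach the same cancellation.
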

% ---- PROOF ----
\begin{proof}
Let $P(R\!=\!r\,\vert\,N\!=\!n)$ denote the probability that the $n\times k$ matrix $\mathbf{M}$ has rank $r$. This is equivalent to the probability that $r$ out of the $n$ collected packets are linearly independent. The probability that at least $x$ of the $k$ source packets will be recovered can be obtained from
\begin{align}
\label{eq.ns_step1}
&P_\mathrm{ns}(\lvert X\rvert\geq x \,\vert\,N=n)\nonumber\\
&=\!\!\!\sum_{r=x}^{\min(n,k)}\!\!P(R=r\,\vert\,N=n)\,P(\lvert X\rvert\geq x\,\vert\,R=r,N=n).
\end{align}
The probability $P(R=r\,\vert\,N\!=\!n)$ is equal to \cite[Sec. II.A]{Gadouleau2010}
\begin{equation}
\label{eq.ns_step3}
P(R=r\,\vert\,N\!=\!n) = \frac{1}{q^{nk}}\qbinom{n}{r}%
\prod_{\ell=0}^{r-1}\left(q^{k}-q^{\ell}\right).%
\end{equation}
Substituting \eqref{eq.cond_atleast_x} and \eqref{eq.ns_step3} into \eqref{eq.ns_step1} and taking into account that
\begin{equation}
\label{eq.ns_step4}
\frac{\qbinom{n}{r}}{\qbinom{k}{r}}\prod_{\ell=0}^{r-1}(q^k-q^\ell)=\prod_{\ell=0}^{r-1}(q^n-q^\ell)
\end{equation}
leads to \eqref{eq.prob_ns}.
\end{proof}

\newpage
% ----- Proposition for systematic network coding -----
\begin{proposition}
\label{prop.sys}
If $k$ source packets and $n_\mathrm{T}-k$ random linear combinations of those $k$ source packets are transmitted over single-hop links, the probability that a receiving node will recover at least $x\leq k$ source packets from $n\leq n_\mathrm{T}$ received packets is
\begin{align}
\label{eq.prob_s}
&P_\mathrm{s}(\lvert X\rvert\geq x\,\vert\, N=n )%
\nonumber%
\\
&=\frac{1}{\binom{n_\mathrm{T}}{n}}
\cdot\!\!\displaystyle\sum_{r=x}^{\min(n,k)}\!\!\!%
\sum_{h=h_{\min}}^{r}\!\!\Biggl(%
\binom{k}{h}\binom{n_\mathrm{T}-k}{n-h}%
q^{-(n-h)(k-h)}%
\!\!\prod_{\ell=0}^{r-h-1}%
(q^{n-h}-q^\ell)%
\cdot\nonumber%
\\
&\quad\cdot\!\!\displaystyle\sum_{i=x_{\min}}^{r-h}%
\!\!\!\binom{k-h}{i}%
\!\sum_{j=0}^{k-h-i}%
\left(-1\right)^{j}%
\binom{k-h-i}{j}%
\qbinom{k-h-i-j}{r-h-i-j}%
\Biggr)%
\end{align}
where $h_{\min}\!=\!\max{(0,n\!\,-\!\,n_\mathrm{T}\!\,+\!\,k)}$ and $x_{\min}\!=\!\max(0,x\!-\!h)$.
\end{proposition}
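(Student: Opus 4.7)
The plan is to condition on the random variable $H$ that counts the number of received packets that are systematic, i.e., equal to unit vectors $\mathbf{e}_i$. Since the $n$ received packets form a uniformly random subset of the $n_\mathrm{T}$ transmitted packets, of which exactly $k$ are systematic and $n_\mathrm{T}-k$ are RLNC combinations, $H$ follows a hypergeometric distribution with $P(H=h\,\vert\,N=n)=\binom{k}{h}\binom{n_\mathrm{T}-k}{n-h}/\binom{n_\mathrm{T}}{n}$ and support $h_{\min}\leq h\leq \min(n,k)$. This accounts immediately for the combinatorial prefactor $\binom{k}{h}\binom{n_\mathrm{T}-k}{n-h}$ appearing inside the sum of \eqref{eq.prob_s}.

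Next I would reduce the problem, conditional on $H=h$ and on the index set $\mathcal{H}\subseteq\{1,\dots,k\}$ of received systematic packets, to a non-systematic instance. Under this conditioning, $\mathbf{M}$ has $h$ rows equal to the unit vectors $\{\mathbf{e}_i:i\in\mathcal{H}\}$ and $n-h$ rows that are independent and uniform over $\mathbb{F}_q^k$; the $h$ source packets indexed by $\mathcal{H}$ are recovered automatically. To count the remaining unit vectors $\mathbf{e}_i$, $i\notin\mathcal{H}$, that lie in $\Row(\mathbf{M})$, I would pass to the quotient $\mathbb{F}_q^k/\Span\{\mathbf{e}_i:i\in\mathcal{H}\}\cong\mathbb{F}_q^{k-h}$. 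Under the quotient map $\pi$, the $n-h$ coded rows descend to independent uniform vectors in $\mathbb{F}_q^{k-h}$, and for $i\notin\mathcal{H}$ we have $\mathbf{e}_i\in\Row(\mathbf{M})$ if and only if $\pi(\mathbf{e}_i)$ lies in the span of the $n-h$ projected rows. This identifies the sub-problem with a non-systematic instance parametrised by $k-h$ sources, $n-h$ received combinations, and recovery threshold $x_{\min}=\max(0,x-h)$.

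With the reduction in place, I would apply Proposition~\ref{prop.nonsys} to the sub-problem under the substitutions $k\to k-h$, $n\to n-h$, $x\to x_{\min}$, which reproduces exactly the bracketed factor of \eqref{eq.prob_s}. The outer index $r$ in \eqref{eq.prob_s} corresponds to the total rank of $\mathbf{M}$, so that $r-h$ plays the role of the sub-matrix rank and naturally bounds the inner summation index $i$ by $r-h$. Multiplying by the hypergeometric weight $\binom{k}{h}\binom{n_\mathrm{T}-k}{n-h}/\binom{n_\mathrm{T}}{n}$ and summing over $h$ and $r$ then yields \eqref{eq.prob_s}.

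The main obstacle is the conditional independence and uniformity step that underpins the reduction: once one conditions on $H=h$ and on the specific set $\mathcal{H}$, the remaining $n-h$ coded rows must still be independent uniform elements of $\mathbb{F}_q^k$, so that their images under $\pi$ are independent and uniform in $\mathbb{F}_q^{k-h}$. This follows because the RLNC coding coefficients are drawn independently of which transmitted packets arrive at the receiver, but the invariance must be stated carefully in order to legitimately invoke Proposition~\ref{prop.nonsys} on the reduced problem.
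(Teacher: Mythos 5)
Your proposal is correct and follows essentially the same route as the paper: condition on the number $h$ of received systematic packets via the hypergeometric weight $\binom{k}{h}\binom{n_\mathrm{T}-k}{n-h}/\binom{n_\mathrm{T}}{n}$, reduce to a non-systematic instance with parameters $k-h$, $n-h$, $x_{\min}=\max(0,x-h)$ and sub-rank $r-h$, and sum over $h$ and $r$. The only difference is presentational: the paper simply asserts that the received coded packets' coding vectors form an $(n-h)\times(k-h)$ random matrix $\mathbf{M}'$ and reuses \eqref{eq.ns_step3} and \eqref{eq.cond_atleast_x}, whereas you justify that reduction explicitly through the quotient $\mathbb{F}_q^k/\Span\{\mathbf{e}_i:i\in\mathcal{H}\}$ -- a slightly more careful version of the same step.
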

% ---- PROOF ----
\begin{proof}
Let us assume that some or none of the $k$ transmitted source packets have been received and let $X^{\prime}\subseteq X$ be the set of indices of the remaining source packets that can be recovered from the received coded packets. If $n^{\prime}$ of the $n_\mathrm{T}-k$ coded packets have been received and $k^{\prime}$ source packets remain to be recovered, the respective coding vectors will form an $n^{\prime}\times k^{\prime}$ random matrix $\mathbf{M}^{\prime}$. The probability that $r^{\prime}\leq\min(k^{\prime}, n^{\prime})$ coding vectors are linearly independent and at least $x^{\prime}\leq r^{\prime}$ source packets can be recovered is given by
\begin{multline}
\label{eq.s_step1}
P(\lvert X^{\prime}\rvert\geq x^{\prime}, R^{\prime}=r^{\prime}\,\vert\,N^{\prime}=n^{\prime})=
\\
P(R^{\prime}=r^{\prime}\,\vert\,N^{\prime}=n^{\prime})\,%
P(\lvert X^{\prime}\rvert\geq x^{\prime}\,\vert\, R^{\prime}=r^{\prime},\,N^{\prime}=n^{\prime})
\nonumber
\end{multline}
where the two terms of the product can be obtained from \eqref{eq.ns_step3} and \eqref{eq.cond_atleast_x}, respectively. The random variables $N'$ and $R'$ denote the number of received coded packets and the rank of matrix $\mathbf{M}^{\prime}$, respectively. If $n$ of the $n_\mathrm{T}$ transmitted packets are received, the probability that $h$ of them are source packets and the remaining $n-h$ are coded packets is
\begin{equation}
\label{eq.s_step2}
P(N^{\prime}=n-h\,\vert\, N=n)=\frac{\binom{k}{h}\binom{n_\mathrm{T}-k}{n-h}}{\binom{n_\mathrm{T}}{n}}.%
\end{equation}
The coding vectors of the $n$ received packets compose a matrix of rank $r$, based on which $x$ or more source packets can be recovered when $h$ of the $n$ received packets are source packets. Parameters $x^{\prime}$, $r^\prime$, $k^{\prime}$ and $n^{\prime}$, which are concerned with the received \textit{coded} packets only, can be written as $x-h$, $r-h$, $k-h$ and $n-h$, respectively. The probability of recovering at least $x$ source packets for all valid values of $r$ and $h$ is
\begin{align}
\label{eq.s_step3}
P&_\mathrm{s}(\lvert X\rvert\geq x \,\vert\, N=n)\nonumber
\\
=&%
\sum_{r=x}^{\min(n,k)}\!\!%
\sum_{h=h_{\min}}^{r}\!\!%
P(N^{\prime}=n-h\,\vert\, N=n)\,\cdot\nonumber%
\\
&\cdot P\bigl(\lvert X^{\prime}\rvert \geq\max(0,\,x\!-\!h),R^{\prime}=r-h\,\vert\,N^{\prime}=n\!-\!h\bigr)
\end{align}
which expands into \eqref{eq.prob_s}. Note that $\max(0,\,x-h)$ ensures that the value of $\lvert X^{\prime}\rvert$ is a non-negative integer when $h>x$.
\end{proof}

\begin{remark}
In systematic RLNC, if the receiving node attempts to recover source packets as soon as the transmission is initiated, i.e., $n_\mathrm{T}\!\leq\! k$, at least $x$ source packets will certainly be recovered when $n\!\geq\! x$ source packets are received, that is,
\begin{equation}
\label{eq.prob_s_less_than_k}
P_\mathrm{s}(\lvert X\rvert\geq x\,\vert\, N=n)=\left\{%
\begin{array}{ll}
1,&\mathrm{if}\;\; n_\mathrm{T}\leq k\;\;\mathrm{and}\;\;x \leq n\\
0,&\mathrm{if}\;\; n_\mathrm{T}\leq k\;\;\mathrm{and}\;\;x > n.%
\end{array}\right.
\end{equation}
\end{remark}

% -----------------------------------------------------------------------------

\section{Results and discussion}
\label{sec.results}

\begin{figure}[t]
\centering
\includegraphics[width=0.95\columnwidth]{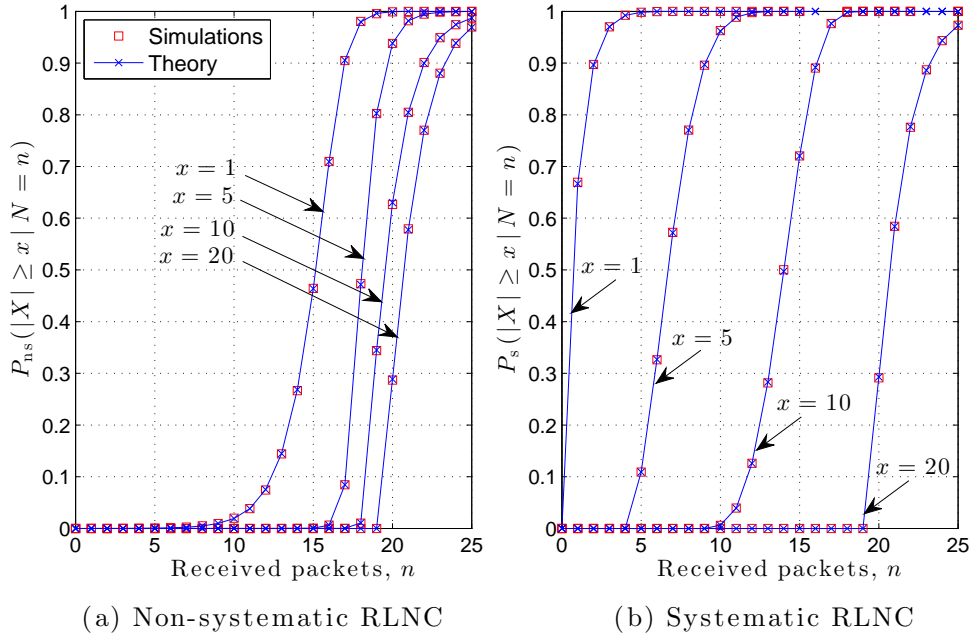}
\caption{{Simulation results and theoretical values for (a) non-systematic RLNC and (b) systematic RLNC. The probability of recovering at least $x$ source packets has been plotted for $q=2$, $k=20$, $x=1,5,10,20$ and $n_\mathrm{T}=30$.}}
\label{fig.theory_vs_sims}
\end{figure}

In order to demonstrate the exactness of the derived expressions, simulations that generated 60000 realisations of an $n\times k$ random matrix $\mathbf{M}$ over $\mathbb{F}_2$ were carried out for $n=1,\dots,30$ and $k\!=\!20$. In each case, matrix $\mathbf{M}$ was converted into reduced row echelon form using Gaussian elimination. Then, the rows that correspond to unit vectors $\mathbf{e}_i$, which represent recoverable source packets, were counted and averaged over all realisations. Fig.~\ref{fig.theory_vs_sims}(a) and Fig.~\ref{fig.theory_vs_sims}(b) show that measurements obtained through simulations match the calculations obtained from \eqref{eq.prob_ns} and \eqref{eq.prob_s} for non-systematic RLNC and systematic RLNC, respectively. In general, simulation results match  analytical predictions for any finite field $\mathbb{F}_q$ of order $q\geq 2$.

Fig.~\ref{fig.study_both} considers the simple case of RLNC transmission over a broadcast erasure channel. If the transmission of $n_\mathrm{T}$ packets is modeled as a sequence of $n_\mathrm{T}$ Bernoulli trials whereby $\varepsilon$ signifies the probability that a transmitted packet will be erased, the probability that a receiving node shall recover \textit{at least} $x$ of the $k$ source packets can be expressed as
\begin{equation}
\label{eq.prob_BEC}
P(\lvert X\rvert\!\geq\! x)\!=\!\displaystyle\sum_{n=x}^{n_\mathrm{T}}\!\displaystyle\binom{n_\mathrm{T}}{n}\!\left(1-\varepsilon\right)^{n}\varepsilon^{n_\mathrm{T}-n}\,P(\lvert X\rvert\!\geq\! x\,\vert\, N\!=\!n).
\end{equation}
The probability $P(\lvert X\rvert\geq x\,\vert\, N=n)$ is equal to \eqref{eq.prob_ns} for non-systematic RLNC {and \eqref{eq.prob_s} or \eqref{eq.prob_s_less_than_k}}, depending on the value of $n_\mathrm{T}$, for systematic RLNC. 

Fig.~\ref{fig.study_both}(a) focuses on non-systematic RLNC and depicts $P(\lvert X\rvert\geq x)$ in terms of $n_\mathrm{T}$ for $x\in\{2,4,10,16,20\}$ when $k\!=\!20$, and for $x\in\{3,6,15,24,30\}$ when $k\!=\!30$. Results have been obtained for $q\in\{2,8\}$ and $\varepsilon\!=\!0.2$. For $q\!=\!2$, the transmission of only a few additional coded packets can increase the fraction of the recovered message from at least $x/k=0.1$ to $x/k=1$. However, for $q$ as low as $8$, the range of $n_\mathrm{T}$ values for which a receiving node will proceed from recovering a small portion of the transmitted message to recovering the whole message gets very narrow. Furthermore, for $q\!=\!2$, segmentation of the message into $k\!=\!20$ source packets permits a receiving node to recover the same fraction ($x/k$) of the message with a higher probability than dividing the same message into $k\!=\!30$ source packets.

Systematic RLNC is considered in Fig.~\ref{fig.study_both}(b). Besides the reduced decoding complexity \cite{Lucani12}, we observe that systematic RLNC enables a receiving node to gradually reveal an increasingly larger portion of the message as more packets are transmitted. However, a large number of source packets or a high order finite field impairs the progressive recovery of the message for $n_\mathrm{T}\!>\!k$. This is because source packets are transmitted for $n_\mathrm{T}\leq k$ but coded packets are sent for $n_\mathrm{T}\!>\!k$; the decoding behaviour of a receiving node changes at $n_\mathrm{T}\!=\!k$ and causes a change in the slope of $P(\lvert X\rvert\geq x)$ for $x/k\!=\!0.8$.

The results show that, if information-theoretic security is required, non-systematic RLNC over finite fields of size 8 or larger can be used to segment each message into a large number of source packets. The number of transmitted packets can then be adjusted to the channel conditions to achieve a balance between the probability of legitimate nodes reconstructing the message and the probability of eavesdroppers being unable to decode even a portion of the message. If the objective of the system is to maximize the number of nodes that will recover at least a large part of a message, systematic RLNC over small finite fields can be used to divide data into source packets. If the receiving nodes do not suffer from limited computational capabilities, the size of the finite field can be increased to improve the probability of recovering the entire message.

\begin{figure}[t]
\captionsetup[subfigure]{labelformat=empty}
\begin{subfigure}{0.492\columnwidth}
\centering
\includegraphics[height=2.4cm]{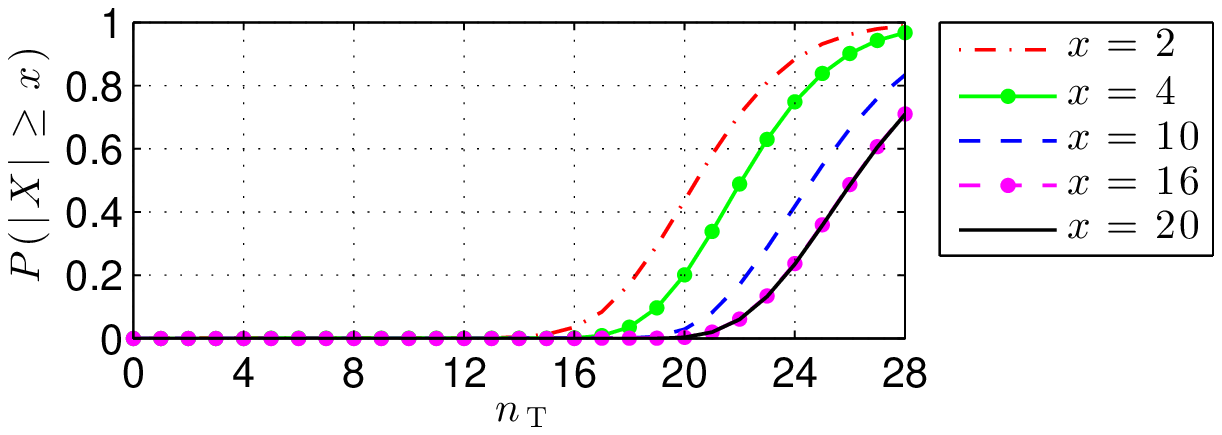}
\vspace{-3mm}
\caption{\scriptsize $k=20$, $q=2$}
\end{subfigure}
\begin{subfigure}{0.492\columnwidth}
\centering
\includegraphics[height=2.4cm]{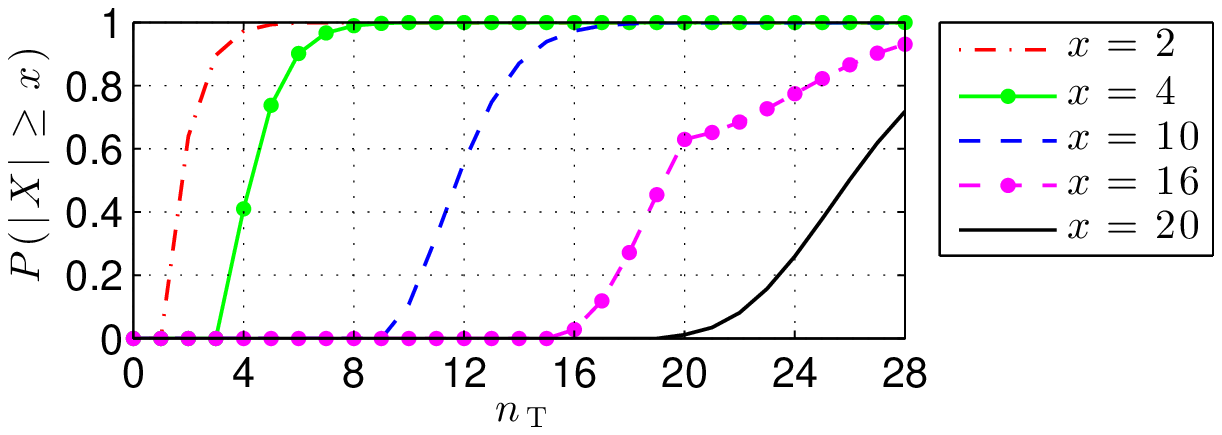}
\vspace{-3mm}
\caption{\scriptsize $k=20$, $q=2$}
\end{subfigure}
\\
\begin{subfigure}{0.492\columnwidth}
\vspace{1mm}
\centering
\includegraphics[height=2.4cm]{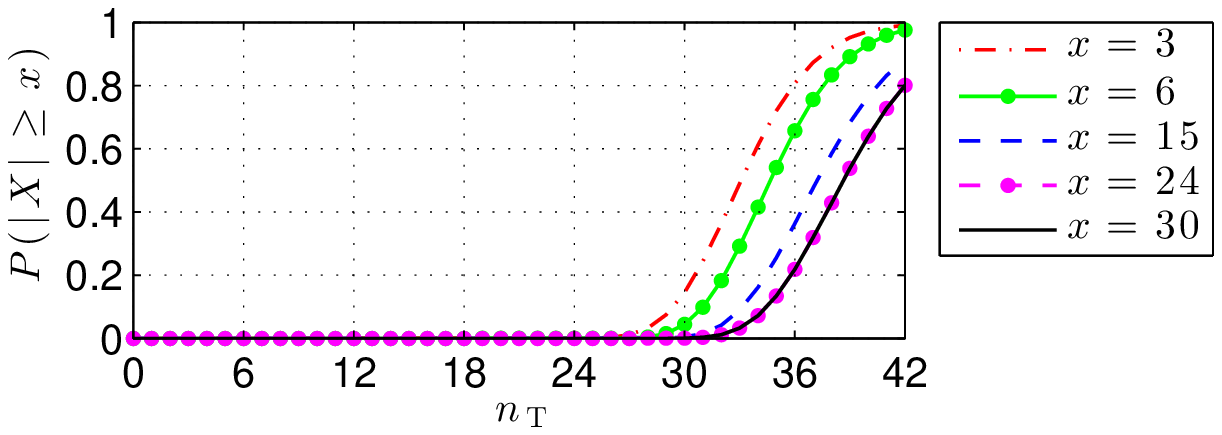}
\vspace{-3mm}
\caption{\scriptsize $k=30$, $q=2$}
\end{subfigure}
\begin{subfigure}{0.492\columnwidth}
\vspace{1mm}
\centering
\includegraphics[height=2.4cm]{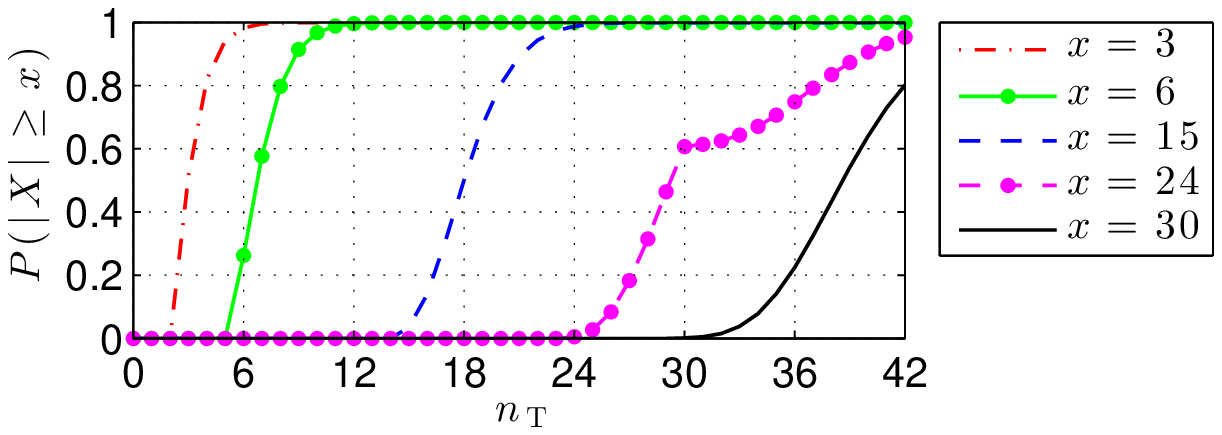}
\vspace{-3mm}
\caption{\scriptsize $k=30$, $q=2$}
\end{subfigure}
\\
\begin{subfigure}{0.492\columnwidth}
\vspace{1mm}
\centering
\includegraphics[height=2.4cm]{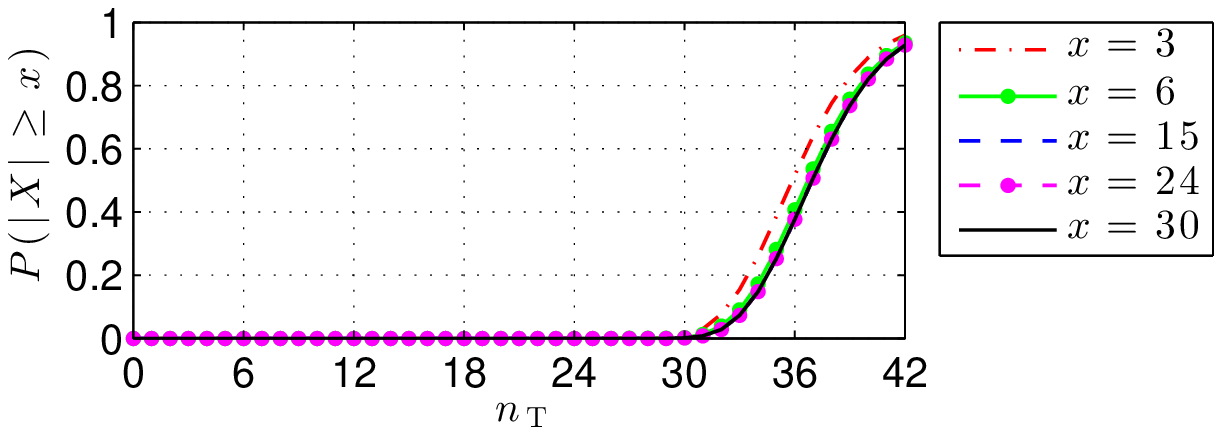}
\vspace{-3mm}
\caption{\scriptsize $k=30$, $q=8$}
\end{subfigure}
\begin{subfigure}{0.492\columnwidth}
\vspace{1mm}
\centering
\includegraphics[height=2.4cm]{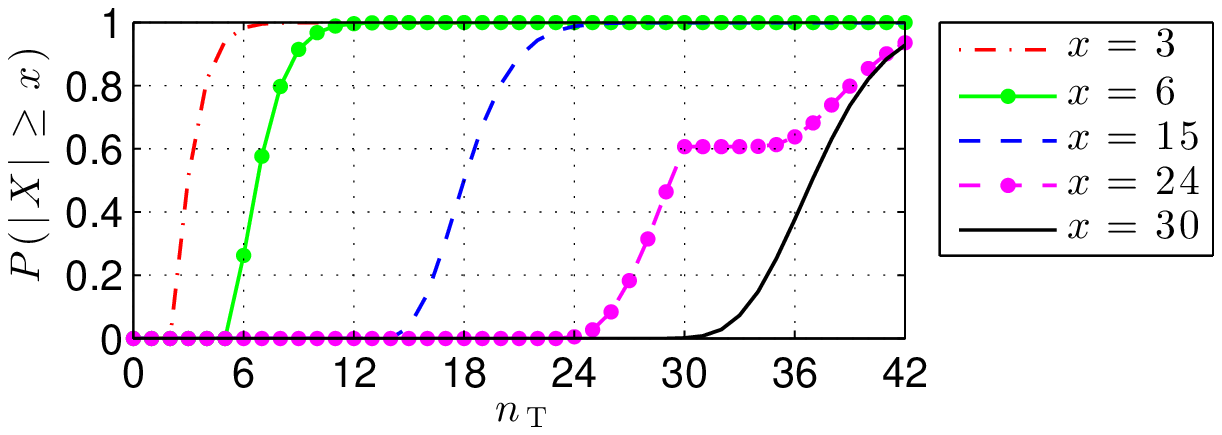}
\vspace{-3mm}
\caption{\scriptsize $k=30$, $q=8$}
\end{subfigure}
\\
\begin{subfigure}{0.492\columnwidth}
\vspace{2mm}
\centering
\caption{\scriptsize (a) Non-systematic RLNC}
\end{subfigure}
\begin{subfigure}{0.492\columnwidth}
\vspace{2mm}
\centering
\caption{\scriptsize (b) Systematic RLNC}
\end{subfigure}
\vspace{-3mm}
\caption{Depiction of the probability of recovering at least $x$ source packets when $n_\mathrm{T}$ packets have been transmitted over a packet erasure channel with $\varepsilon\!=\!0.2$ using (a) non-systematic RLNC and (b) systematic RLNC.}
\label{fig.study_both}
\vspace{-4mm}
\end{figure}

% -----------------------------------------------------------------------------

\section{Conclusions}
\label{sec.conclusions}

This paper derived exact expressions for the probability of decoding a fraction of a source message upon reception of an arbitrary number of network-coded packets. Results unveiled the potential of non-systematic network coding in offering weak information-theoretic security, even when operations are over small finite fields. On the other hand, systematic network coding allows for the progressive recovery of the source message as the number of received packets increases, especially when the size of the finite field is small.

% -----------------------------------------------------------------------------

\section{Acknowledgments}
\label{sec.ack}

Jessica Claridge has been supported by an EPSRC PhD studentship. Both authors appreciate the support of the COST Action IC1104 and thank Simon R. Blackburn for his advice.

% -----------------------------------------------------------------------------

\bibliographystyle{IEEEtran}
\bibliography{IEEEabrv,references}

% -----------------------------------------------------------------------------

\end{document}